\newcommand{\covered}{\ensuremath{\mathop{\vartriangleleft}}}
\newcommand{\beq}{\begin{equation}}
\newcommand{\eeq}{\end{equation}}
\newcommand{\bea}{\begin{eqnarray}}
\newcommand{\eea}{\end{eqnarray}}
\newcommand{\Sets}{\mbox{\textbf{Sets}}}
 \newcommand{\cov}{\nabla}
\newcommand{\id}[1]{\ensuremath{\mathrm{id}}}
\newcommand{\op}{\ensuremath{^{\mathrm{op}}}}
\newcommand{\Sh}{\ensuremath{\mathrm{Sh}}}
\newcommand{\uS}{\underline{\Sigma}}
\newcommand{\uA}{\underline{A}}
\newcommand{\ie}{\textit{i.e.}}
\newtheorem{theorem}{Theorem}
\newtheorem{lemma}[theorem]{Lemma}
\newtheorem{corollary}[theorem]{Corollary}
\newtheorem{definition}[theorem]{Definition}
\newenvironment{proof}[1][Proof]%
{ \begin{trivlist}%
  \item[\hskip \labelsep {\bfseries #1}]%
}%
{ \end{trivlist}%
}
\newcommand{\qed}{\nobreak\hfill$\Box$}
\renewcommand{\cov}{\covered}
\newcommand{\cC}[1]{\ensuremath{\mathcal{C}(#1)}}
\begin{document}

\title{The space of measurement outcomes as a spectrum for non-commutative algebras}
\author{Bas Spitters}
\maketitle
\begin{abstract}
Bohrification defines a locale of hidden variables internal in a topos. We
find that externally this is the space of \emph{partial}
measurement outcomes. By considering the $\neg\neg$-sheafification, we obtain the space of
measurement outcomes which coincides with the spectrum for commutative C*-algebras.
\end{abstract}

\section{Introduction}
By combining Bohr's philosophy of quantum mechanics, Connes' non-commuta\-tive
geometry~\cite{Connes}, constructive Gelfand
duality~\cite{banaschewskimulvey00a,banaschewskimulvey00b,coquand05,CoquandSpitters:cstar} and
inspiration from Doering and Isham's
spectral presheaf~\cite{doeringisham:review}, we proposed Bohrification
as a spatial quantum logic~\cite{qtopos,Bohrification}. Given a C*-algebra $A$, modeling a quantum
system, consider the
poset of Bohr's classical concepts\[
\cC A:=\{C  \mid C\text{ is a commutative C*-subalgebra of }A\}.
\]
In the functor topos $\Sets^{\cC A}$ we consider the \emph{Bohrification} $\uA$: the trivial functor
$C\mapsto C$.
This is an internal C*-algebra of which we can compute the spectrum, an internal locale Σ in the
topos $\Sets^{\cC A}$. This locale, or its externalization, is our proposal for an intuitionistic
quantum logic~\cite{qtopos,Bohrification}. 

In the present paper we explore a possible refinement of this proposal motivated by what happens 
for commutative algebras and by questions about maximal subalgebras. 

In section~\ref{pMO} we compute the externalization of this locale. It is the
space of partial measurement outcomes: the points are pairs of a C*-subalgebra together with a
point of its spectrum. This construction raises two natural questions:
\begin{itemize}
\item Can we restrict to the maximal commutative subalgebras, i.e.\ total measurement frames?
\item Are we allowed to use classical logic internally?
\end{itemize}
In section~\ref{maximal} we will see that, in a sense, the answers to both of these questions
are positive. The collection of maximal commutative subalgebras covers the space in
the dense topology and this dense, or double negation, topology forces (sic) the
logic to be classical. By considering the $\neg\neg$-sheafification, we obtain a genuine
generalization of the spectrum. Moreover, our previous constructions~\cite{qtopos} of the phase
space (Σ) and the state space still apply essentially unchanged.

\section{Preliminaries}
An extensive introduction to the context of the present paper can be found
in~\cite{Bohrification,qtopos} and the references therein. Here we will just repeat the bare minimum
of definitions.

A site on an poset defines a covering relation. To simplify the presentation we restrict to the
case of a meet-semilattice.

\begin{definition}\label{def:cover}
  Let $L$ be a meet-semilattice. A \emph{covering relation} on $L$
  is a relation $\cov ⊂  L × P(L)$ satisfying:
  \begin{enumerate}
    \item if $x ∈ U$ then $x \cov U$;
    \item if $x \cov U$ and $U \cov V$ (\ie\ $y \cov V$ for all
      $y \in U$) then $x \cov V$;
    \item if $x \cov U$ then $x \wedge y \cov U$;
    \item if $x \cov U$ and $x \cov V$, then $x \cov U \wedge V$,
      where $U \wedge V = \{x \wedge y \mid x \in U, y \in V\}$.
  \end{enumerate}
Such a pair $(L,\cov)$ is called a \emph{formal topology}.
\end{definition}

 Every formal topology defines a locale, conversely every locale can be presented in such a way.

\begin{definition}
  Let $(L, \cov)$ be a formal topology. A \emph{point} is an inhabited 
  $α ⊂ L$ that is filtering with respect to $≤ $, and such
  that for each $a ∈ α $ if $a \cov U$, then $U ∩ α$ is
  inhabited. In short, it is a completely prime filter.
\end{definition}

The spectrum Σ of a C*-algebra $A$ can be described directly as a lattice $L(A)$ together with covering a relation; see~\cite{CoquandSpitters:cstar}.

\section{Measurements}
In algebraic quantum theory~\cite{Emch,Haag:LQP,landsman98}, a measurement is a (maximal) Boolean subalgebra of the set of projections of a von Neumann algebra. The outcome of a measurement is the consistent assignment of either 0 or 1 to each element (test, proposition) of the Boolean algebra: the outcome is an element of the Stone spectrum. Unlike von Neumann-algebras, C*-algebras need not have enough projections. It is customary to replace the Boolean algebra by a commutative C*-subalgebra and the Stone spectrum by the Gelfand spectrum. Although a detailed critique of the measurement problem is beyond the scope of this paper, with the previous motivation we will make the following definition.

\begin{definition}
A \emph{measurement outcome} is a point in the spectrum of a maximal commutative subalgebra.
\end{definition}

The choice to restrict to \emph{maximal} subalgebras varies between authors. The present choice fits with our presentation.

\section{The space of partial measurement outcomes}\label{pMO}
Iterated topos constructions, similar to iterated forcing in set theory were studied by
Moerdijk~\cite{Moerdijk}\cite[C.2.5]{johnstone02a}. To wit, let $\mathcal{S}$ be the
ambient topos. One may think of the topos \Sets, but we envision applications where a different
choice for $\mathcal{S}$ is appropriate~\cite{Bohrification}.

\begin{theorem}[Moerdijk] Let $\mathbb{C}$ be a site in $\mathcal{S}$ and
$\mathbb{D}$ be a site in $\mathcal{S}[\mathbb{C}]$, the topos of sheaves over $\mathbb{C}$. Then
there is a site\footnote{The notation $\ltimes$ is motivated by the special case where $\mathbb{C}$
is a group $G$ considered as a category with one object and $\mathbb{D}$ is a
group $H$ in $\Sets^G$. Then $\mathbb{C}\ltimes \mathbb{D}$ is indeed the
semi-direct product $H\ltimes G $}
$\mathbb{C}\ltimes\mathbb{D}$ such that
\[
\mathcal{S}[\mathbb{C}][\mathbb{D}]=\mathcal{S}[\mathbb{C}\ltimes\mathbb{D}].
\]
\end{theorem}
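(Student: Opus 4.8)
The plan is to realize the iterated topos $\mathcal{S}[\mathbb{C}][\mathbb{D}]$ as sheaves on a single explicitly constructed site, obtained by a Grothendieck (category-of-elements) construction applied to the externalization of $\mathbb{D}$, and then to match the two sheaf conditions. First I would externalize the internal site. Since $\mathbb{D}$ is an internal category in $\mathcal{S}[\mathbb{C}]=\mathrm{Sh}(\mathbb{C})$, its object-of-objects $D_0$ and object-of-morphisms $D_1$ are objects of $\mathrm{Sh}(\mathbb{C})$, hence functors $\mathbb{C}^{\mathrm{op}}\to\mathcal{S}$ satisfying the $\mathbb{C}$-sheaf condition, together with the usual source, target, identity and composition maps. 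Externalizing, each stage $C\in\mathbb{C}$ carries an ordinary category $\mathbb{D}(C)$ with objects $D_0(C)$ and morphisms $D_1(C)$, and each arrow $f\colon C'\to C$ of $\mathbb{C}$ induces a restriction functor $f^{*}\colon\mathbb{D}(C)\to\mathbb{D}(C')$. I then define $\mathbb{C}\ltimes\mathbb{D}$ as the total category of this indexed category: an object is a pair $(C,d)$ with $d\in D_0(C)$, and a morphism $(C',d')\to(C,d)$ is a pair $(f,\varphi)$ with $f\colon C'\to C$ in $\mathbb{C}$ and $\varphi\colon d'\to f^{*}(d)$ in $\mathbb{D}(C')$, composed in the evident way using functoriality of restriction. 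In the degenerate case where $\mathbb{C}$ is a group $G$ and $\mathbb{D}$ a group $H$ internal to $\mathcal{S}^{G}$, this construction recovers the semidirect product $H\ltimes G$, as the footnote to the statement anticipates.

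Next I would put a topology on $\mathbb{C}\ltimes\mathbb{D}$, generated by two classes of covers. The \emph{horizontal} covers come from $\mathbb{C}$: for a covering family $\{f_i\colon C_i\to C\}$ in the site $\mathbb{C}$, take $\{(f_i,\mathrm{id})\colon(C_i,f_i^{*}d)\to(C,d)\}$. The \emph{vertical} covers come from the internal topology of $\mathbb{D}$: at a fixed stage $C$, interpret the internal covering relation of $\mathbb{D}$ through the Kripke--Joyal semantics of $\mathrm{Sh}(\mathbb{C})$ to obtain families $\{(\mathrm{id}_C,\varphi_j)\colon(C,d_j)\to(C,d)\}$ in the fiber $\mathbb{D}(C)$. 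One then checks that the families generated by these two classes satisfy the pullback-stability and transitivity axioms of a Grothendieck topology (the general-category analogue of Definition~\ref{def:cover}); stability is the only nonroutine point and follows from stability of the two given topologies together with functoriality of $f^{*}$.

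The heart of the argument is then the identification of the sheaves. Forgetting topologies, the classical Grothendieck construction supplies an equivalence between presheaves on the total category $\mathbb{C}\ltimes\mathbb{D}$ and internal presheaves on $\mathbb{D}$ inside the presheaf topos over $\mathbb{C}$; under it a presheaf $F$ corresponds to a family of fiber-presheaves glued along the restriction functors $f^{*}$. Cutting down to sheaves, I would argue that $F$ satisfies descent for the horizontal covers exactly when the associated internal presheaf takes values in $\mathrm{Sh}(\mathbb{C})$ (i.e.\ is objectwise a $\mathbb{C}$-sheaf), and that $F$ satisfies descent for the vertical covers exactly when, read internally in $\mathrm{Sh}(\mathbb{C})$, it is a $\mathbb{D}$-sheaf. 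Because the combined topology is generated by the two classes, being a sheaf for it is equivalent to being a sheaf for both at once, giving $\mathrm{Sh}(\mathbb{C}\ltimes\mathbb{D})\simeq\mathrm{Sh}_{\mathrm{Sh}(\mathbb{C})}(\mathbb{D})=\mathcal{S}[\mathbb{C}][\mathbb{D}]$ as toposes over $\mathcal{S}$.

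The main obstacle will be precisely this last equivalence, namely translating the internal $\mathbb{D}$-sheaf condition, which is a formula in the internal language of $\mathrm{Sh}(\mathbb{C})$, into an external descent condition on the total site, and verifying that the external and internal sheafifications commute so that the two generating classes of covers do not interfere. Concretely, one must check by Kripke--Joyal forcing that an $F$ already horizontally a sheaf satisfies vertical descent on $\mathbb{C}\ltimes\mathbb{D}$ if and only if the internal sheaf axiom for $\mathbb{D}$ holds at every stage; managing the quantifiers over internal covers at varying stages $C$, compatibly with restriction along the arrows of $\mathbb{C}$, is where the real work lies.
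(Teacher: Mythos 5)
The paper never proves this theorem itself: it is imported from Moerdijk and from Johnstone's treatment (C2.5), and the only place where the construction becomes visible in the paper is the covering relation it writes down for $\mathbb{C}\ltimes\mathbb{D}$ in Section 4. Your overall architecture -- externalize $\mathbb{D}$ to a $\mathbb{C}$-indexed category, take the Grothendieck construction for the underlying category of $\mathbb{C}\ltimes\mathbb{D}$, equip it with a topology, and identify sheaves on the total site with internal sheaves -- is the right one, and your presheaf-level equivalence is correct. The genuine gap is in your definition of the topology. You generate it from horizontal covers coming from $J$ on $\mathbb{C}$ together with \emph{fiberwise} vertical covers $\{(\mathrm{id}_C,\varphi_j)\colon (C,d_j)\to(C,d)\}$. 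But an internal covering sieve $R$ on $d$ at stage $C$ is a subpresheaf whose elements live at \emph{all} stages $C'\to C$, and in general $R$ is not refined, even $J$-locally, by the sieve generated by its stage-$C$ elements. The correct covers of $(C,d)$ are the mixed families $\{(f_i,\varphi_i)\colon(C_i,d_i)\to(C,d)\}$ whose internally generated sieve is $K$-covering at stage $C$; this is exactly the form the paper uses: $(C,u)\covered(D_i,v_i)$ iff $C\subseteq D_i$ for all $i$ and $C\Vdash u\covered V$, where $V$ is the presheaf generated by the conditions $D_i\Vdash v_i\in V$, with the $D_i$ allowed to lie strictly below $C$.

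Here is a concrete counterexample to your generation claim. Let $\mathbb{C}$ be the two-element poset $0\le 1$ with trivial topology (so $\mathcal{S}[\mathbb{C}]$ is the presheaf topos), and let $\mathbb{D}$ be the internal poset with $D(1)=\{x\le\top\}$, $D(0)=\{x,y\le\top\}$ ($x,y$ incomparable), restriction the evident inclusion. Let $K$ be the internal topology generated by declaring the internal sieve $R$ with $R(1)=\{x\}$, $R(0)=\{x,y\}$ to cover $\top$ at stage $1$; one checks directly that taking all sieves containing $R$ at stage $1$, all sieves containing $\{x,y\}$ at stage $0$, and maximal sieves elsewhere satisfies the internal maximality, stability and transitivity axioms. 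Internal $K$-sheaves $G$ must then satisfy $G(1,\top)\cong G(1,x)\times G(0,y)$. In your generated topology, however, the only covering sieve of $(1,\top)$ is the maximal one: the fiber $R(1)=\{x\}$ does not internally cover $\top$ (the sieve it generates has $\{x\}$ at stage $0$, which is not in $K$), and the horizontal covers are trivial since $J$ is. Hence a presheaf with, say, $G(1,\top)$ a point and $G(1,x)\times G(0,y)$ two points is a sheaf for your site but not an internal $K$-sheaf, so your site presents a strictly larger topos than $\mathcal{S}[\mathbb{C}][\mathbb{D}]$. What makes the fiberwise picture adequate in the paper's own application is the quoted Lemma from \cite{qtopos} ($C\Vdash u\covered V$ iff $u\covered V(C)$ for an internal sublattice $V$): there every internal cover is refined by a fiberwise one. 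That is a special property of the spectral covering relation, not of arbitrary internal sites, and Moerdijk's theorem is stated for arbitrary sites, so your proof needs the mixed covers from the start.
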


We will specialize to sites on a poset and without further ado focus on our main example.
As before, let \[\cC A:=\{C  \mid C\text{ is a commutative C*-subalgebra of }A\}.\]
Let $\mathbb{C}:=\cC A\op$ and $\mathbb{D}=Σ$ the spectrum of the Bohrification, we
compute $\mathbb{C}\ltimes \mathbb{D}$. The objects are pairs $(C,u)$, where $C\in \cC A$ and $u$
in $L(C)$. Define the order $(D,v)\leq (C,u)$ as $D ⊃ C$ and $v ⊂ u$.
In terms of
forcing, this is the information order and the objects are forcing conditions.
We add a covering relation $(C,u)\cov (D_i,v_i)$ as for all $i$, $C⊂ D_i$ and $C \Vdash u\cov V$,
where $V$ is the pre-sheaf generated by the conditions $D_i\Vdash v_i\in V$. It follows from the general theory that this
is a Grothendieck topology.

We simplify: the pre-sheaf $V$ is generated by the
conditions $D_i\Vdash v_i ∈ V$ means 
\[V(D):=\{v_i ∈ V(D) \mid D ⊃ D_i\}.\] 
Hence,
\[C\Vdash u\cov V \text{ iff }u\cov\{v_i\mid D_i=C\}\]
by the following lemma.

\begin{lemma}\emph{\cite{qtopos}}
Let $V$ be an internal sublattice of $L$. Then $C\Vdash u\cov V$ iff $u\cov V(C)$.
\end{lemma}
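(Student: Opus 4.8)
The plan is to unfold $C \Vdash u \cov V$ through the Kripke--Joyal semantics of the Alexandrov topos $\Sets^{\cC{A}}$ and then match it against the external covering by a double induction on how the two covering relations are generated. Since $\cC A$ is a poset ordered by inclusion, $L$ is the covariant functor $C \mapsto L(C)$ with transition maps $\iota_{CD}\colon L(C) \to L(D)$ (the frame maps dual to restriction of characters along $C \subseteq D$), and an internal sublattice $V$ is a subfunctor with each $V(C) \subseteq L(C)$ a sublattice and $\iota_{CD}(V(C)) \subseteq V(D)$. Following constructive Gelfand duality I treat $\cov$ as inductively generated: it is the least subobject of $L \times \powerset L$ closed under rules (1)--(4) of Definition~\ref{def:cover} together with the basic covers presenting $\Sigma$; externally each $L(C)$ carries the covering $\cov_C$ generated by the stage-$C$ images of the same data.

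First I would prove a \emph{monotonicity lemma}: if $u \cov_C W$ then $\iota_{CD}(u) \cov_D \iota_{CD}(W)$ for all $C \subseteq D$, by induction on the generation of $\cov_C$, using that each $\iota_{CD}$ preserves finite meets and carries basic covers to basic covers. This makes the assignment $S(C) := \{(u,V) : u \cov_C V(C)\}$ a genuine subobject of $L \times \powerset L$: if $u \cov_C V(C)$ then $\iota_{CD}(u) \cov_D \iota_{CD}(V(C))$, whence $\iota_{CD}(u) \cov_D V(D)$ since $\iota_{CD}(V(C)) \subseteq V(D)$ and the target of a cover may be enlarged. For the implication $C \Vdash u \cov V \Rightarrow u \cov_C V(C)$ I would show $S$ is closed under the internal rules and contains the basic covers, so minimality of $\cov$ yields $\cov \subseteq S$; evaluating at $C$ gives the claim. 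Rule (1) reduces to ``$u \in V(C) \Rightarrow u \cov_C V(C)$'', and rules (3)--(4) to the external meet rules together with the sublattice closure of $V(C)$.

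For the converse $u \cov_C V(C) \Rightarrow C \Vdash u \cov V$ I would fix $C$ and induct on the external derivation of $u \cov_C V(C)$, lifting each external rule to a forcing statement. In the base case $u \in V(C)$ is literally a member of the subobject $V$ at stage $C$, so internal rule (1) gives $C \Vdash u \cov V$. For an intermediate external target $U' \subseteq L(C)$ I replace it by the subfunctor $\overline{U'}$ with $\overline{U'}(D) = \iota_{CD}(U')$, whose value at $C$ is $U'$; the induction hypothesis then supplies the internal forcing statements for the sub-derivations, and internal rules (1)--(4), which $\cov$ satisfies by construction, assemble them into $C \Vdash u \cov V$.

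The delicate point throughout is rule (2), transitivity, since it is the only rule whose internal form $U \cov V$ abbreviates the universally quantified ``$y \cov V$ for all $y \in U$'', and under Kripke--Joyal this quantifier ranges over all future stages $D \supseteq C$ and all $y \in U(D)$, not merely over $U(C)$. In the forward direction this causes no trouble: specializing to $D = C$ recovers exactly the premises $u \cov_C U(C)$ and $y \cov_C V(C)$ for $y \in U(C)$ needed for external transitivity at stage $C$. In the backward direction it is resolved by persistence of forcing: from $C \Vdash y_0 \cov V$ for each $y_0 \in U'$ one gets $D \Vdash \iota_{CD}(y_0) \cov V$ at every later stage, and since every element of $\overline{U'}(D)$ is such an $\iota_{CD}(y_0)$, the universally quantified premise of internal rule (2) is forced at $C$. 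The real work, and the main obstacle, is thus the bookkeeping of the power object $\powerset L$ in the presheaf topos --- recognising that a ``set of conditions'' at stage $C$ is a subfunctor and that its future values are governed by the transition maps --- so that the future-stage quantifiers collapse to the stage-$C$ data appearing in the external relation. Once rule (2) is matched on both sides, the two inductions close and the equivalence follows.
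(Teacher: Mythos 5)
Your argument is essentially sound, but it takes a genuinely different route from the source: the present paper does not prove this lemma at all --- it imports it from \cite{qtopos} --- and the proof given there does not induct over an inductively generated covering relation. Instead it uses the explicit characterization of the spectral cover coming from constructive Gelfand duality \cite{CoquandSpitters:cstar}: $D_a \covered U$ iff for every rational $q>0$ there is a finite $U_0 \subseteq U$ with $D_{a-q} \leq \bigvee U_0$. Unfolding Kripke--Joyal semantics on this single $\forall\exists$ formula, the only future-stage quantification is the one over $D \supseteq C$ hidden in the internal $\forall q$, and it collapses immediately because stage-$C$ witnesses persist: the transition maps $L(C)\to L(D)$ are lattice maps sending generators to generators, and $U(C)\subseteq U(D)$ for a subfunctor. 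So the cited proof is a short direct computation once the explicit formula is in hand, whereas your double induction establishes the same collapse in greater generality --- for any internal site on the poset whose basic covers are given stagewise and are preserved by the transition maps --- without ever invoking the C*-specific formula. What your route costs is an extra dependency and two details you should make explicit: (i) you must know that the internal spectral cover coincides with the internally least relation closed under rules (1)--(4) of Definition~\ref{def:cover} and the basic covers, which is itself a theorem of constructive Gelfand duality rather than a definition, and on a mere meet-semilattice presentation the basic covers must be taken to include the finite-join covers $a \vee b \covered \{a,b\}$ and $\bot \covered \emptyset$, which rules (1)--(4) alone do not supply; (ii) at the end of the backward direction you should say explicitly that $C \Vdash u \covered \overline{V(C)}$ entails $C \Vdash u \covered V$, because $\overline{V(C)}\subseteq V$ and covers are monotone in the target. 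Your two key mechanisms --- the monotonicity lemma and persistence of forcing --- are exactly the mechanism on which the cited proof runs, so the arguments agree where it matters; yours buys generality and a cleaner conceptual separation, the original buys brevity.
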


\begin{definition}
A \emph{partial measurement outcome} is a point in the spectrum of a commutative subalgebra.
A \emph{consistent ideal of partial measurement outcomes} is a family $(C_i,σ_i)$ of partial measurement outcomes such that the $C_i$ are an ideal in $\cC A$ and if $C_i ⊂ C_j$, then $σ_i=σ|_{C_j}$.
\end{definition}

\begin{theorem}\label{thm:pMO}
The points of the locale generated by $\mathbb{C}\ltimes\mathbb{D}$ are consistent ideals of partial measurement outcomes.
\end{theorem}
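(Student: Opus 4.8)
The plan is to establish a bijection between points of the locale and consistent ideals of partial measurement outcomes by constructing explicit maps in both directions and checking that they are mutually inverse. Recall (Definition of point) that a point is a completely prime filter $\alpha$ on the poset of conditions $(C,u)$ equipped with the covering relation above, where the order is $(D,v)\le(C,u)$ iff $D\supseteq C$ and $v\subseteq u$, the latter read through the lattice maps $\iota_{C,D}\colon L(C)\to L(D)$ induced by inclusions $C\subseteq D$. In the forward direction, given such an $\alpha$ I would set $I:=\{C\mid (C,u)\in\alpha\text{ for some }u\}$ and, for each $C\in I$, collect $\sigma_C:=\{u\in L(C)\mid (C,u)\in\alpha\}$. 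The first task is to show $I$ is an ideal of $\cC A$: directedness is read off from the filtering of $\alpha$, since a common lower bound $(E,w)\le(C,u)$ and $(E,w)\le(C',u')$ forces $E\supseteq C,C'$, so $E\in I$ is an upper bound; downward closure follows from upward closure of the filter, because $(C,u)\le(C',\top)$ whenever $C'\subseteq C$ (frame maps preserve $\top$), so $(C',\top)\in\alpha$ and $C'\in I$.

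Next I would verify that each $\sigma_C$ is a point of $L(C)$. Upward closure is immediate, and filtering uses that the $\iota$ preserve finite meets: if $(C,u),(C,u')\in\alpha$ then the lower bound $(E,w)$ satisfies $w\le\iota_{C,E}(u)\wedge\iota_{C,E}(u')=\iota_{C,E}(u\wedge u')$, so $(C,u\wedge u')\in\alpha$ by upward closure and $u\wedge u'\in\sigma_C$. Complete primeness is where the simplification lemma enters: a cover $u\cov W$ in $L(C)$ lifts to a basic cover $(C,u)\cov\{(C,w)\}_{w\in W}$ of the site, so complete primeness of $\alpha$ yields some $(C,w)\in\alpha$ with $w\in W$, i.e. $w\in\sigma_C\cap W$. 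The remaining and most delicate point of the forward direction is consistency: for $C\subseteq C'$ in $I$ one must prove $\sigma_C=\sigma_{C'}\rst_C$, where $\sigma_{C'}\rst_C:=\{u\in L(C)\mid\iota_{C,C'}(u)\in\sigma_{C'}\}$ is the restriction of the point dual to the inclusion. Here the internal-to-external translation is needed to identify this restriction with the frame map $\iota_{C,C'}$, after which the set equality follows by combining upward closure with the filtering of $\alpha$ across the two fibres.

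For the reverse direction, given a consistent ideal $(C_i,\sigma_i)$ with underlying ideal $I$, I would define $\alpha:=\{(C,u)\mid C\in I,\ u\in\sigma_C\}$ and check it is a point. Inhabitation is immediate since $I$ contains the bottom algebra $\C1$ with its unique spectral point. Upward closure and, above all, filtering are where consistency does the work: given $(C,u),(C',u')\in\alpha$, directedness of $I$ yields $E\supseteq C,C'$ in $I$, and consistency gives $\iota_{C,E}(u),\iota_{C',E}(u')\in\sigma_E$, so that $w:=\iota_{C,E}(u)\wedge\iota_{C',E}(u')\in\sigma_E$ furnishes a common lower bound $(E,w)$. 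Complete primeness is then the easy half: a basic cover $(C,u)\cov\{(D_i,v_i)\}$ reduces by the simplification lemma to $u\cov\{v_i\mid D_i=C\}$ in $L(C)$, and since $u\in\sigma_C$ is a point some $v_i$ with $D_i=C$ lies in $\sigma_C$, whence $(D_i,v_i)=(C,v_i)\in\alpha$. Finally I would observe that $\Phi\colon\alpha\mapsto(C,\sigma_C)_{C\in I}$ and $\Psi\colon(C_i,\sigma_i)\mapsto\alpha$ are inverse to one another directly from their definitions.

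The main obstacle I anticipate is the consistency step in the forward direction, together with the justification that it suffices to test complete primeness on the basic covers supplied by Moerdijk's construction rather than on the whole generated Grothendieck topology. Both hinge on correctly transporting the internal covering relation and the internal restriction maps of the spectrum $\Sigma$ through the Kripke--Joyal semantics of $\Sets^{\cC A}$; once the lemma is used to reduce forcing statements $C\Vdash u\cov V$ to honest covers $u\cov V(C)$ in the external lattice $L(C)$, the fibrewise arguments become routine and only the cross-fibre compatibility retains genuine content.
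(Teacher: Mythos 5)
Your proposal is correct and follows essentially the same route as the paper's own proof: extract fibrewise points $\sigma_C$ from a completely prime filter and check consistency via the order $(D,u)\leq(C,u)$ together with filtering, and conversely turn a consistent ideal into a filter whose complete primeness is verified on basic covers via the simplification lemma reducing $C\Vdash u\cov V$ to $u\cov V(C)$. You merely spell out details the paper leaves implicit (both inclusions in the consistency step, and that the two constructions are mutually inverse), and your worry about testing primeness only on generating covers is unproblematic since the generated covering relation is exactly what the paper's formal-topology setup uses.
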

\begin{proof}
Let τ be a point, that is a completely prime filter. Suppose that $(D,u) ∈ τ$, then by the
covering relation for $\mathbb{D}$, τ defines a point of the spectrum $Σ(D)$. This point is defined
consistently: If $u∈ L(C) ⊂ L(D)$, then $(D,u)\leq (C,u)$. Hence, if $(D,u) ∈ τ$, so is
$(C,u)$ and the point in $Σ(D)$ defines a point in $Σ(C)$ as a restriction of functionals.
When both $(C,u)$ and $(C',u')$ are in τ, then, by directedness, there exists $(D,v)$ in τ such that
$C,C'⊂ D$ and $v⊂ u,u'$. Moreover, $(C,u)\in τ$ implies $(C,\top)∈ τ$, now the set $\{C\mid (C,\top)\in τ\}$ is directed and downclosed.

Conversely, let $(C_i,σ_i)$ be a consistent ideal of partial measurement outcomes, then
\[\mathcal{F}:=\{(C_i,u)\mid  σ_i ∈ u\}\]
defines a filter: it is up-closed and lower-directed. Suppose that $(C_i,u)\cov (A_j,v_j)$, that is $j$, $C⊂ A_j$ and $A_k \Vdash u\cov V$, where $V=\{ v_j\mid A_j=A_k\}$. Then $(A_k,v_j)∈ \mathcal{F}$, because $σ_k$ is a point/completely prime filter.
\qed\end{proof}

It is tempting to identify the ideal of partial measurement outcomes with its limit. However, the ideal and its limit define different points. These points are identified in Section~\ref{maximal}.

Let us call this locale $pMO$ for (consistent ideals of) partial measurement outcomes. For commutative
C*-algebras $pMO$ is similar, but not equal, to the spectrum:
\begin{corollary}\label{spectrum}
For a compact regular $X$, the points of $pMO(C(X))$ are points of
the spectrum of a C*-\emph{sub}algebra of $C(X)$.
\end{corollary}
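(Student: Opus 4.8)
The plan is to use Theorem~\ref{thm:pMO} to replace an abstract point of $pMO(C(X))$ by a consistent ideal $(C_i,\sigma_i)$ of partial measurement outcomes, and then to collapse this ideal into a single subalgebra-with-character by passing to a directed limit. Since $X$ is compact regular, $C(X)$ is a unital commutative C*-algebra, so \emph{every} $C\in\cC{C(X)}$ is itself a commutative C*-subalgebra and constructive Gelfand duality identifies the points of the spectrum $\Sigma(C)$ with the characters of $C$. Given the ideal, I would form the subalgebra $C:=\overline{\bigcup_i C_i}$ together with a character $\sigma$ on it, and show that $\sigma$ is a point of $\Sigma(C)$; this is the required point of the spectrum of a C*-subalgebra.

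First I would check that $C$ is well defined. By Theorem~\ref{thm:pMO} the family $\{C_i\}$ is an ideal of $\cC{C(X)}$, hence downward closed and up-directed for inclusion, so $\bigcup_i C_i$ is a directed union of commutative C*-subalgebras and is therefore a (generally non-closed) commutative $*$-subalgebra; its norm closure $C$ is then a commutative C*-subalgebra of $C(X)$. Next I would patch the characters: for $a\in\bigcup_i C_i$ pick $i$ with $a\in C_i$ and set $\sigma_0(a):=\sigma_i(a)$. Directedness together with the consistency condition $\sigma_i=\sigma_j|_{C_i}$ makes this independent of the choice of $i$, so that $\sigma_0$ is a linear, multiplicative and unital functional on $\bigcup_i C_i$.

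It remains to pass from $\sigma_0$ to a genuine point of $\Sigma(C)$, and this extension step is where I expect the real work to lie. The key fact is that a character of a C*-algebra is automatically norm-decreasing, so $\sigma_0$ is uniformly continuous on the dense $*$-subalgebra $\bigcup_i C_i\subseteq C$ and extends uniquely to a continuous linear map $\sigma$ on $C$; continuity and density then transport linearity, multiplicativity and unitality to $\sigma$, so that $\sigma$ is a character and hence, by constructive Gelfand duality (using compact regularity), a point of $\Sigma(C)$. This exhibits each point of $pMO(C(X))$ as a point of the spectrum of the C*-subalgebra $C$.

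For the converse, and to see that \emph{every} point of the spectrum of a C*-subalgebra arises this way, I would start from an arbitrary $C\in\cC{C(X)}$ with $\sigma\in\Sigma(C)$ and form the principal ideal $\{(D,\sigma|_D)\mid D\subseteq C\}$; this is downward closed and directed with compatible restrictions, hence a consistent ideal whose limit recovers $(C,\sigma)$. I expect the main subtlety to be not this construction but the failure of injectivity: the original ideal $\{C_i\}$ and the principal ideal of its limit $C$ are genuinely distinct points of $pMO(C(X))$ yet yield the same $(C,\sigma)$, which is exactly the phenomenon flagged just before the corollary and only repaired by the $\neg\neg$-sheafification of Section~\ref{maximal}.
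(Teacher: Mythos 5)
Your proof is correct and takes essentially the same route the paper intends: the corollary is left as an immediate consequence of Theorem~\ref{thm:pMO}, and your construction (closure of the directed union of the ideal, gluing the characters by consistency, extending by norm-continuity to a character of the limit algebra) is exactly the natural fleshing-out of that step, with the principal-ideal converse as a bonus. Your closing observation on the failure of injectivity also matches the paper's own caveat that an ideal and its limit define different points of $pMO$, which is precisely why the statement says ``points of the spectrum of a C*-\emph{sub}algebra'' rather than asserting an isomorphism with the spectrum.
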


An explicit external description of the locale may be found in~\cite{Bohrification}. The present
computation gives an alternative description which makes it easy to compute the
points.

\section[Maximal subalgebras and classical logic]{Maximal commutative subalgebras, classical logic
and the spectrum}\label{maximal}
As stated in the introduction, we address the following questions:
\begin{itemize}
\item Can we restrict to the \emph{maximal} commutative subalgebras?
\item Are we allowed to use classical logic internally?
\end{itemize}

In a sense, the answers to both of these questions are positive. The collection of maximal
commutative subalgebras covers the space $\cC A$ in the dense topology and this dense, or double
negation, topology forces the logic to be classical.

Sheaves for the dense topology may be used to present classical set theoretic forcing or Boolean
valued models. In set theoretic forcing one considers the topos
$\Sh(P,\neg\neg)$~\cite[p.277]{maclanemoerdijk92}. The dense topology on a poset $P$ is defined as
$p\cov D$ if
$D$ is dense below $p$: for all $q\leq p$, there
exists a $d\in D$ such that $d\leq q$.\footnote{Constructively, this also defines a topology~\cite{Coquand:Goodman}. However, we need classical logic to prove that it coincides with the double negation topology.} The locale presented by this site is a Boolean algebra,
the topos is a Boolean valued model. If $P$ is directed and $I$ is an ideal in $P$, the directed join is contained in the double negation of $I$.
Hence the double negation topology is coarser than the $j$-topology described above.

This topos of $\neg\neg$-sheaves satisfies the axiom of
choice~\cite[VI.2.9]{maclanemoerdijk92} when our base topos does. The associated sheaf functor sends
the presheaf topos $\hat{P}$ to the sheaves $\Sh(P,\neg\neg)$. The sheafification can be described
explicitly~\cite[p.273]{maclanemoerdijk92} for $V\rightarrowtail W$:
\[
\neg\neg V(p)=\{x\in W(p)\mid \text{for all } q\leq p \text{ there
exists } r\leq q \text{ such that } x\in V(r) \}.
\]

We apply this to the poset $\cC A$.
We write $A$ for the constant functor $C\mapsto A$. Then $\uA\subset A$ in $\Sets^{\cC A}$.

For commutative $A$, $\cC A$ has $A$ as bottom element. For all $C$, $\uA_{\neg\neg}(C)=A$.

For the general case, we observe that each $C$ is covered by the collection of all its
supersets. By Zorn\footnote{Here we use classical meta-logic.}, each commutative subalgebra is
contained in a maximal commutative one. Hence the
collection of maximal commutative subalgebras is dense. So, $\uA_{\neg\neg}(C)$
is the intersection of all maximal commutative subalgebras containing $C$.

The covering relation for $(\cC A,\neg\neg)⋉\uS$ is $(C,u)\cov (D_i,v_i)$ iff $C ⊂ D_i$ and $C⊩
u\cov V_{\neg\neg}$, where $V_{\neg\neg}$ is the sheafification of the presheaf $V$ generated by the
conditions $D_i⊩v_i∈ V$. Now, $V\rightarrowtail L$, where $L$ is the spectral lattice of the
\emph{presheaf} $\uA$.
\begin{eqnarray*}
V_{\neg\neg}(C) &=&\{u∈ L(C)∣ ∀ D≤ C ∃ E≤ D . u∈ V(E)\}.
\end{eqnarray*}
So, $(C,u)\cov (D_i,v_i)$ iff
\[
∀ D≤ C ∃ D_i≤ D. u\cov V(D_i).
\]

\begin{theorem}
The locale $MO$ generated by $(\cC A,\neg\neg)⋉\uS$ classifies measurement outcomes.  It is a
(dense) sublocale of $pMO$.
\end{theorem}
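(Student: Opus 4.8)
The plan is to prove the two assertions in turn, reusing the description of the points of $pMO$ from Theorem~\ref{thm:pMO} together with the density of the maximal commutative subalgebras in the $\neg\neg$-topology established just above. First I would settle the sublocale statement. The locales $MO$ and $pMO$ are built from the \emph{same} base of pairs and the \emph{same} internal spectrum $\uS$; they differ only in the topology imposed on $\cC A$, the trivial (presheaf) topology for $pMO$ against the dense topology for $MO$. Since sheafification only enlarges a presheaf, the canonical inclusion $V\rightarrowtail V_{\neg\neg}$ gives $V(C)\subseteq V_{\neg\neg}(C)$ for every $C$, so any $pMO$-cover $u\cov V(C)$ is also the $MO$-cover $u\cov V_{\neg\neg}(C)$ by conditions (1) and (2) of Definition~\ref{def:cover}. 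Hence the $MO$-topology refines the $pMO$-topology; at the level of frames this refinement is a quotient map, exhibiting $MO$ as a sublocale of $pMO$. Density is then automatic: the additional covers are precisely those of the double negation topology on $\cC A$, and the $\neg\neg$-sublocale is the smallest dense one, a property inherited by the semidirect product with $\uS$.

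For the classification I would analyse a point of $MO$, that is a completely prime filter $\tau$ for the $MO$-covering. By the sublocale inclusion just established, $\tau$ is in particular a point of $pMO$, so by Theorem~\ref{thm:pMO} it is a consistent ideal of partial measurement outcomes and $\{C\mid (C,\top)\in\tau\}$ is an ideal in $\cC A$. The new ingredient is density of the maximal subalgebras: every $C$ is $\neg\neg$-covered by the maximal $M\supseteq C$, so $(C,\top)\cov\{(M,\top)\mid M\supseteq C,\ M \text{ maximal}\}$ is a genuine $MO$-cover --- here $\top\cov V(M)$ holds trivially, and density supplies, for each $D\supseteq C$, a maximal $M\supseteq D$ as required by the unfolded cover $\forall D\supseteq C\,\exists D_i\supseteq D.\ u\cov V(D_i)$. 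Complete primeness then forces $(M,\top)\in\tau$ for some maximal $M$. I would check that $M$ is unique, using directedness of $\tau$ and maximality (any $D\supseteq M$ in the filter equals $M$), so that $\{C\mid(C,\top)\in\tau\}=\downset M$ and $\tau$ restricts to a completely prime filter $\sigma:=\{u\mid (M,u)\in\tau\}$ on $L(M)$, i.e.\ a point of $\uS(M)$ --- a measurement outcome.

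Conversely, given a measurement outcome, that is a point $\sigma$ of the spectrum of a maximal $M$, I would restrict $\sigma$ along the inclusions $C\subseteq M$ and set $\tau:=\{(C,u)\mid C\subseteq M,\ \sigma|_C\in u\}$; that this is a $pMO$-point follows exactly as in Theorem~\ref{thm:pMO}, and the only extra obligation is complete primeness against the $\neg\neg$-covers, which unfold to $\forall D\supseteq C\,\exists D_i\supseteq D.\ u\cov V(D_i)$. I expect this last verification to be the main obstacle: one must instantiate the universally quantified $D$ at $M$ itself, invoke maximality to force the witnessing $D_i$ to equal $M$, and only then use complete primeness of $\sigma$ on $L(M)$ to meet $V(M)=\{v_i\mid D_i=M\}$. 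Finally, showing these two assignments are mutually inverse collapses the ideal/limit distinction noted after Theorem~\ref{thm:pMO}, identifying each point of $MO$ with a pair $(M,\sigma)$ consisting of a maximal $M$ and a point $\sigma$ of its spectrum, which is exactly what it means for $MO$ to classify measurement outcomes.
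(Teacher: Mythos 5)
Your proposal is correct and follows essentially the same route as the paper: the paper's own (two-line) proof consists exactly of your forward direction --- a point $\tau$ of $MO$ is in particular a $pMO$-point as in Theorem~\ref{thm:pMO}, and the density cover $(C,\top)\cov\{(M,\top)\mid M\supseteq C,\ M\text{ maximal}\}$ together with complete primeness forces $(M,\top)\in\tau$ for some maximal $M$. The remaining material you supply --- the refinement argument showing every $pMO$-cover is an $MO$-cover (hence a sublocale), uniqueness of $M$ via the filtering property and maximality, and the converse construction of an $MO$-point from a measurement outcome $(M,\sigma)$ by instantiating the $\neg\neg$-cover condition at $D=M$ --- is left implicit in the paper but is consistent with its definitions and correctly carried out.
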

\begin{proof}
 In the context of Theorem~\ref{thm:pMO} we suppose that $(C,\top)∈τ$. The subalgebra $C$ is
covered by all the maximal commutative subalgebras containing it, so by directedness we conclude
that $(M,\top)∈ τ$ for some maximal $M$.
\end{proof}

The $MO$ construction is a non-commutative generalization of the
spectrum. In this sense it behaves better then $pMO$; compare
Corollary~\ref{spectrum}.
\begin{corollary}
For a compact regular $X$, $X ≅ MO(C(X))$.
\end{corollary}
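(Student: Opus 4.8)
The plan is to specialise the entire construction to the commutative algebra $A=C(X)$ and to show that $(\cC A,\neg\neg)\ltimes\uS$ collapses to the ordinary Gelfand spectrum of $C(X)$, after which the statement is precisely constructive Gelfand duality~\cite{CoquandSpitters:cstar}. First I would record the two features of the commutative case already isolated above: the poset $\cC A$ has $A$ itself as bottom element, and $A$ is its only maximal commutative subalgebra, so that $\uA_{\neg\neg}(C)=A$ for every $C$. At the level of points this already settles matters: by the preceding theorem the points of $MO(C(X))$ are the measurement outcomes, and since the unique maximal commutative subalgebra is $A$ itself these are exactly the points of the spectrum $\Sigma(C(X))$.

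To upgrade this to an isomorphism of locales I would exploit the bottom element directly. Because $A$ is least in $\cC A$, the singleton $\{A\}$ is dense below every $C$, so $A$ is a dense point and $\Sh(\cC A,\neg\neg)\simeq\Set$. The associated sheaf functor is the inverse image of a geometric morphism, hence preserves the geometric presentation of the spectrum; under the above equivalence it carries $\uA_{\neg\neg}$ to its value at the bottom, the external algebra $\uA(A)=C(X)$, and correspondingly $\uS_{\neg\neg}$ to the external spectrum $\Sigma(C(X))$. Feeding this into Moerdijk's theorem, the site $(\cC A,\neg\neg)\ltimes\uS$ presents the topos $\Set[\Sigma(C(X))]$, whence $MO(C(X))\cong\Sigma(C(X))$. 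Constructive Gelfand duality then yields $\Sigma(C(X))\cong X$ for compact regular $X$, and the composite $X\cong\Sigma(C(X))\cong MO(C(X))$ is the required isomorphism.

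The step I expect to be the main obstacle is precisely this collapse: one must verify that $\neg\neg$-sheafification commutes with the passage to the spectrum, i.e.\ that $\uS_{\neg\neg}\cong\Sigma(\uA_{\neg\neg})$, and equivalently that the explicit covering relation
\[
(C,u)\cov(D_i,v_i)\ \text{ iff }\ \forall\,D\leq C\ \exists\,D_i\leq D.\ u\cov V(D_i)
\]
reduces at the bottom element to the Gelfand covering relation on the spectral lattice $L(C(X))$. Granting this, the identification is genuinely localic and not merely a bijection of points, with the point computation of the first paragraph serving as a consistency check.
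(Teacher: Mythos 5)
Your proposal is correct and rests on exactly the observation the paper's own one-line proof uses: $C(X)$ is the unique maximal commutative subalgebra of $C(X)$, i.e.\ the bottom element of the forcing order, so the dense topology collapses the site to the ordinary Gelfand spectrum, and Gelfand duality finishes the argument. Your elaboration (the equivalence $\Sh(\cC{A},\neg\neg)\simeq\Set$, geometricity of the spectrum presentation, and Moerdijk's theorem) is just the detailed justification of the localic collapse that the paper leaves implicit, so this is the same approach, carefully spelled out.
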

\begin{proof}
$C(X)$ is the only maximal commutative subalgebra of $C(X)$.
\qed\end{proof}

By considering the double negation we may use classical logic internally in our Boolean valued
model. 

The Kochen-Specker theorem can be reformulated as the non-existence of certain global sections~\cite{doeringisham:review,Bohrification}.
This connection carries over essentially unchanged, since a global section of a sheaf is (by definition) a global section of the sheaf considered as a presheaf.
\begin{theorem}Kochen-Specker: Let $H$ be a Hilbert space with $\dim H>2$ and let $A=B(H)$.
Then the $\neg\neg$-sheaf ∑ does not allow a global section.
\end{theorem}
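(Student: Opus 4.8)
The plan is to deduce the result from the classical Kochen--Specker theorem by way of the remark preceding the statement: for the global-sections functor $\Gamma=\Hom(1,-)$ and any $\neg\neg$-sheaf $F$, the fact that the terminal object $1$ is already a sheaf, together with the full and faithful inclusion $\Sh(\cC A,\neg\neg)\hookrightarrow\Sets^{\cC A}$, gives $\Hom_{\Sh}(1,F)=\Hom_{\Sets^{\cC A}}(1,F)$, the right-hand side computing the global sections of $F$ regarded as a presheaf. Applied to $F=\Sigma$, a global section of the $\neg\neg$-sheaf $\Sigma$ is therefore nothing but a natural family $(\sigma_C)_{C\in\cC A}$ with $\sigma_C$ a point of $\Sigma(C)$ and $\sigma_D|_C=\sigma_C$ whenever $C\subset D$; equivalently, by Theorem~\ref{thm:pMO}, a consistent ideal of measurement outcomes that is in fact \emph{total}, being defined over all of $\cC A$ at once.

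First I would unwind what such a family is in measurement-theoretic terms. Each point $\sigma_C$ of the Gelfand spectrum $\Sigma(C)$ is a character of $C$, hence restricts to a $\{0,1\}$-valued assignment on the projections of $C$ that respects the lattice operations and the functional relations holding inside the commutative context $C$. Because the maximal commutative subalgebras are dense in $\cC A$ (as established above, using Zorn), the compatible family $(\sigma_C)$ is determined by its values $\sigma_M$ on the maximal contexts $M$; at such an $M$ there is no proper commutative superalgebra against which to refine, so $\sigma_M$ is a genuine character of $M$. Naturality forces these to agree on overlaps, so the $(\sigma_C)$ glue into a single global $\{0,1\}$-valuation on the projections of $A=B(H)$ that is multiplicative and additive within every commutative context --- precisely a Kochen--Specker valuation. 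It then remains to invoke the Kochen--Specker theorem in its classical form: for $A=B(H)$ with $\dim H>2$ no such global valuation exists~\cite{doeringisham:review,Bohrification}. Consequently the family $(\sigma_C)$ cannot exist, and $\Sigma$ admits no global section.

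The hard part is the middle step: checking that passing to the $\neg\neg$-sheafification neither creates nor destroys global sections, i.e.\ that a section of the sheaf $\Sigma$ restricts on the dense family of maximal subalgebras to honest spectral points which reassemble into the classical notion of a valuation. Sheafification a priori only guarantees that a section is \emph{$\neg\neg$-locally} a point, so the care lies in using the density of the maximal contexts --- where the covering relation is trivial --- to upgrade this to an actual consistent valuation to which the classical obstruction applies verbatim. Once this identification is in place no new argument is required, since the entire nonexistence is then carried by the classical Kochen--Specker theorem.
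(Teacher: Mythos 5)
Your proposal is correct and follows essentially the same route as the paper: the paper's entire proof is the remark preceding the theorem (a global section of a $\neg\neg$-sheaf is, via the full and faithful inclusion into presheaves preserving the terminal object, a global section of the underlying presheaf) together with the citation of the presheaf-level Kochen--Specker reformulation of~\cite{doeringisham:review,Bohrification}. What you add --- using density of the maximal commutative subalgebras, where the $\neg\neg$-covering is trivial, to extract honest characters that glue into a classical valuation --- is exactly the unwinding that the paper delegates to those citations, not a different argument.
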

Internally the axiom of choice holds, so Σ is a compact Hausdorff space. Still
spectrum does not have a \emph{global} 
point and the algebra does not have a global element. 

As an example, consider the matrix algebra $M_n$. Let $D_n$ be the $n$-dimensional diagonal matrix.
The maximal
subalgebras of $M_n$ are $\{φ D_n ∣ φ ∈ SU_n\}$; see \cite{n-level}. Moreover, $Σ ≅ \{1,\ldots,
n\}$
in
$\Sh(\cC A,\neg\neg)$. This is a complete Boolean algebra.
We have arrived at the setting of iterated forcing as in set theory.
Iterated forcing in set theory may be presented as follows; see Moerdijk~\cite[Ex~1.3a]{Moerdijk}.
If $P$ is a poset in $\Sets$, and $Q$ is a poset
in $\hat P$, then $P ⋉ Q$ is the poset in \Sets\ of pairs $(p,q)$ with $p ∈ P$, $p ⊩ q
∈ Q$, and $( p, q) ≤ (p', q')$ iff $p≤ p'$ and $p⊩ q≤ q'$. If
$\mathcal{E}=\Sh(P,\neg\neg)$, and $Q$ is a poset in $\mathcal{E}$, $\mathcal{F}:=
\Sh_\mathcal{E}(Q,\neg\neg)$, then $\mathcal{F}\cong \Sh(P⋉ Q,\neg\neg)$. In other words,
$(P,\neg\neg)⋉ (Q,\neg\neg)≅(P\ltimes Q,\neg\neg)$.
If, as in the case of $M_n$, $Q$ is a cBA in $\mathcal{E}$, then $(Q,\neg\neg)\cong Q$. So $(P⋉
Q,\neg\neg)≅ (P,\neg\neg)⋉ Q$. We expect similar simplifications when starting from a Rickart
C*-algebra~\cite{HLSSyn}.

A similar $\neg\neg$-transformation can be applied to our Bohrification of OMLs. In the
example studied in~\cite{Bohrification}, we compute a 17 element \emph{Heyting} algebra from an OML.
Adding
the double negation we obtain a 16 element \emph{Boolean} algebra. The function
$f(0)=0$ and $f(i)=1$ is `eventually' equal to the constant function 1.
As a result, we obtain the product of 4 Boolean algebras, the spectrum is the coproduct of the
corresponding locales.

\section{Conclusions and further research}
We have presented a non-commutative generalization of the spectrum motivated by physical
considerations.

We suggest another way to restrict to maximal subalgebras, while preserving the possibility to
compute a unique functional from a global section. Consider a matrix algebra. Let $p ∈ C $ be a
projection and suppose that $M \mapsto σ_M∈ Σ(M)$ is continuous with respect to the unitary group
action. Then $σ(p)∈ \{0,1\}$, say it is 0. Since the unitary group is connected and acts
transitively on the maximal subalgebras, $σ(u^*p)=0$ for all $u$. Suppose that $p∈ M_1,M_2$. Let $u$
transform $M_1$ into $M_2$, but leave $p$ fixed. We see that $σ(p)=0$ independent of the choice of
maximal subalgebra. By linearity and density, this extends from projections to general elements: σ
may be uniquely defined on all elements.
This suggests that, at least for matrix algebras, the independence guaranteed by the poset, may also
be guaranteed by the group action. We leave this issue to future research.

Bohrification, i.e.\ the $pMO$ construction, is not functorial when we equip
C*-algebras with their usual morphisms. The construction \emph{is} functorial
when we change the notion of morphism~\cite{vdBH}. More work seems to be need for the $MO$
construction: We have $(I_2, \top)\cov (C(2), \{(0,1), (1,0)\})$. However, this no longer holds when
we map $C(2)$
into $M_2$. In short, covers need not be preserved under natural notions of morphism.

Bohrification may be described as a (co)limit~\cite{vdBH}. While technically different the
intuitive meaning is similar: we are only interested in what happens eventually.

As in~\cite{qtopos,Bohrification} treat $\cC A$ as a mere poset. However, at
least in the finite dimensional case, this poset has an interesting manifold
structure~\cite{Bohrification}. 
Escardo~\cite{escardo1998properly} provides a construction of the support of a
locale which often coincides with its maximal points. It may be possible to use
this construction to refine the present results by maintaining the topological
structure.

\section{Acknowledgements} I would like to thank Chris Heunen, Klaas Landsman
and Steve Vickers for discussions
and comments on a draft of this paper. I thank the referees for their comments.

\bibliography{class,bibliography}
\bibliographystyle{alpha}
\end{document}